\documentclass[letterpaper]{article} 
\usepackage{aaai25}  
\usepackage{times}  
\usepackage{helvet}  
\usepackage{courier}  
\usepackage[hyphens]{url}  
\usepackage{graphicx} 
\urlstyle{rm} 
\usepackage{natbib}  
\usepackage{caption} 
\frenchspacing  
\setlength{\pdfpagewidth}{8.5in} 
\setlength{\pdfpageheight}{11in} 
%

%
\usepackage{newfloat}
\usepackage{listings}
\DeclareCaptionStyle{ruled}{labelfont=normalfont,labelsep=colon,strut=off} 
\lstset{%
	basicstyle={\footnotesize\ttfamily},
	numbers=left,numberstyle=\footnotesize,xleftmargin=2em,
	aboveskip=0pt,belowskip=0pt,%
	showstringspaces=false,tabsize=2,breaklines=true}
%
\pdfinfo{
/TemplateVersion (2025.1)
}

\setcounter{secnumdepth}{0} 

%

\usepackage{epsfig,epsf,psfrag,array,amsmath,amssymb,amsthm}
\usepackage{url,comment,enumerate,graphicx,graphics}
\usepackage{multicol}
\usepackage{pdfpages,framed,tcolorbox}

\usepackage[algoruled, vlined, linesnumbered]{algorithm2e}

\SetKw{Break}{break}
\SetKw{Continue}{continue}

\newcommand*{\nwspace}{\hspace*{.1em}}

\renewcommand{\leq}{\leqslant}
\renewcommand{\geq}{\geqslant}
\renewcommand{\le}{\leqslant}
\renewcommand{\ge}{\geqslant}
\renewcommand{\epsilon}{\varepsilon}

\let\oldsqrt\sqrt
\def\hksqrt{\mathpalette\DHLhksqrt}
\def\DHLhksqrt#1#2{\setbox0=\hbox{$#1\oldsqrt{#2\,}$}\dimen0=\ht0
   \advance\dimen0-0.2\ht0
   \setbox2=\hbox{\vrule height\ht0 depth -\dimen0}%
   {\box0\lower0.4pt\box2}}
\renewcommand\sqrt\hksqrt

\newtheorem{theorem}{Theorem}
\newtheorem{lemma}[theorem]{Lemma}

\newtheorem{corollary}[theorem]{Corollary}

\newcommand{\poly}{\textsf{poly}}

\newcommand{\prob}{\operatorname{P}}

\newcommand*{\Otilde}{\widetilde{O}}

\newcommand{\F}{{\cal F}}


\providecommand{\ignore}[1]{} 


\title{Efficient Fault-Tolerant Search by Fast Indexing of Subnetworks}
\author {
    Davide Bilò\textsuperscript{\rm 1},
    Keerti Choudhary\textsuperscript{\rm 2},
    Sarel Cohen\textsuperscript{\rm 3},
    Tobias Friedrich\textsuperscript{\rm 4},
    Martin Schirneck\textsuperscript{\rm 5}
}
\affiliations {
    \textsuperscript{\rm 1}Department of Information Engineering, Computer Science and Mathematics, 
	University of L'Aquila, Italy\\
    \textsuperscript{\rm 2}Department of Computer Science and Engineering, Indian Institute of Technology Delhi, India\\
    \textsuperscript{\rm 3}School of Computer Science, The Academic College of Tel Aviv-Yaffo, Israel\\
    \textsuperscript{\rm 4}Hasso Plattner Institute, University of Potsdam, Germany\\
    \textsuperscript{\rm 5}Faculty of Computer Science, University of Vienna, Austria\\
    davide.bilo@univaq.it, keerti@iitd.ac.in, sarelco@mta.ac.il, tobias.friedrich@hpi.de, martin.schirneck@univie.ac.at
}

\begin{document}

\maketitle

\begin{abstract}
We design sensitivity oracles for error-prone networks.
For a network problem $\Pi$, the data structure
preprocesses a network $G=(V,E)$ and sensitivity parameter $f$ such that,
for any set $F\subseteq V\cup E$ of up to $f$ link or node failures,
it can report a solution for $\Pi$ in $G{-}F$.
We study three network problems $\Pi$.
\begin{itemize}
    \item $L$-\textsc{Hop Shortest Path}: Given $s,t \in V$, is there a shortest $s$-$t$-path in $G{-}F$ with at most $L$ links? 
    \item $k$\textsc{-Path}: Does $G{-}F$ contain a simple path with $k$ links?
    \item \textsc{$k$-Clique}: Does $G{-}F$ contain a clique of $k$ nodes?
\end{itemize}

Our main technical contribution is a new construction of $(L,f)$-\emph{replacement path coverings}
($(L,f)$-RPC) in the parameter realm where $f \,{=}\, o(\log L)$.
An  $(L,f)$-RPC is a family $\mathcal{G}$ of 
subnetworks of $G$ which, for every $F \,{\subseteq}\, E$ with $|F| \,{\le}\, f$, 
has a subfamily $\mathcal{G}_F \,{\subseteq}\, \mathcal{G}$ such that (i) no subnetwork in $\mathcal{G}_F$ contains a link of $F$ and (ii) for each $s,t \,{\in}\, V$, if $G{-}F$ contains a shortest $s$-$t$-path with at most $L$ links, then some subnetwork in $\mathcal{G}_F$ retains at least one such path. Our $(L, f)$-RPC has almost the same size as the one by \citet{WY13} but it improves the time to query $\mathcal{G}_F$ from $\Otilde(f^2L^f)$ to $\Otilde(f^{\frac{5}{2}} L^{o(1)})$. 
It also improves over the size and query time of the $(L,f)$-RPC by \citet{KarthikParter21DeterministicRPC} by nearly a factor of $L$.
We then derive oracles for $L$-\textsc{Hop Shortest Path}, \textsc{$k$-Path}, and \textsc{$k$-Clique} from this. Notably, our solution for \textsc{$k$-Path} improves the query time of the one by \citet{Bilo22FixedParameterSensitivityOracles} for $f=o(\log k)$.
\end{abstract}

\begin{table*}[t]
\centering
\begin{tabular}{ccccc}

\textbf{Total Subnetworks} & \textbf{Query Time} & \textbf{Subnetworks Relevant to }$F$ & \textbf{Randomization} & \textbf{Reference} \\
\noalign{\hrule height 1pt}\\[-8pt]

 $\Otilde(fL^f)$ & $\Otilde(f^2 L^f)$ & $\Otilde(f L^{f-|F|})$ & randomized & \citet{WY13}\\[.25em]

 $\Otilde(fL)^{f+1}$ & $\Otilde(f^2 L)$ & $\Otilde(fL)$ & deterministic & \citet{KarthikParter21DeterministicRPC} \\[.25em]
 
 $\Otilde(fL^{f+o(1)})$ & $\Otilde(f^{\frac{5}{2}} L^{o(1)})$ & $\Otilde(fL^{o(1)})$ & randomized & Theorem~\ref{thm:stand-alone-trees}
\end{tabular}
\caption{
	Comparison of $(L,f)$-replacement path coverings.
	The sensitivity $f$ and cut-off parameter $L$ satisfy $f = o(\log L)$.
	The number of subnetworks relevant for the failure set $F$ refers to the size $|\mathcal{G}_F|$,
	while the query time is the time to compute $\mathcal{G}_F$.
 }
\label{table:results_trees}
\end{table*}

\section{Introduction}

Networks are central structures in computer science as they can model different types of relations 
we encounter in real-world applications. 
Numerous algorithms and data structures have been developed to solve problems in static networks where nodes and links do not change over time. However, as networks in real life are prone to transient failures, many of these algorithms require recomputations from scratch even when only a few network components  are malfunctioning. In many applications one may have an a priori known bound $f$ on the number of simultaneous failures, especially in the context of independent failures where the likelihood of multiple failures decreases exponentially.
This is called the \emph{fault-tolerant} setting or \emph{sensitivity} analysis
and the resulting data structures are called \emph{sensitivity oracles}.
In the past two decades, 
many sensitivity oracles have been designed for
classical network problems, e.g.,\ connectivity~\cite{PatrascuT:07,DuanP09a, DuanP10, DuanP:17}, 
shortest paths~\cite{
ChoS024, 
DeThChRa08,CLPR12,DuanP09a, BiloCCC0KS23, DuRe22, DeyGupta24, GuR21} and routing~\cite{CLPR12}. 

We continue this line of work by designing data structures that, given a network problem $\Pi$, 
a network $G=(V,E)$ with $n$ nodes and $m$ links, and a parameter $f$,
preprocesses $G$ into a 
\emph{oracle with sensitivity $f$} that, when queried with a set $F\subseteq V\cup E$ of size  $|F|\leq f$, reports a solution for $\Pi$ in $G{-}F$. Specifically, we study the following problems $\Pi$.
\begin{itemize}
    \item $L$-\textsc{Hop Shortest Path}: Given two nodes $s,t \in V$, is there a shortest $s$-$t$ path in $G{-}F$ with at most $L$ links?
    \item $k$-\textsc{Path}: Does $G{-}F$ contain a simple path with $k$ links? 
    \item \textsc{$k$-Clique}: Does $G{-}F$ contain a clique of $k$ nodes? 
\end{itemize}
In fact, we can handle all problems with the property that any certificate solution $S\subseteq G$ for $\Pi$, like e.g., a $k$-path, is also a certificate solution of $G'$ for $\Pi$, for every $S\subseteq G' \subseteq G$. Although we focus on decision problems, our data structures are also capable of reporting a certificate solution.

A naive solution consists in enumerating  all possible subnetworks, one for each $F\subseteq V\cup E$ with  $|F|\leq f$, and storing a static data structure for each computed subnetwork. This is, however, prohibitively expensive w.r.t. both space and preprocessing time as there are 
$\binom{|V|+|E|}{\le f} = \Theta(n^f \,{+}\, m^f)$ 
subnetworks if node and link failures can occur.
We need a more efficient way to construct subnetworks that allow us to make the data structure fault tolerant.
Moreover, we also need some compact and time-efficient indexing scheme to quickly access the correct subnetworks upon query.

Our motivation aligns with the work of \cite{ChoS024}, which addresses fault tolerance in dynamic settings such as navigation, logistics, and communication networks, where rapid rerouting is crucial. Similarly, \cite{Blum_Funke_Storandt_2018} motivates using preprocessing to reduce the search space for path finding. Many real-world networks have a small diameter, and this makes our work relevant as we assume only the existence of all-pairs shortest paths of length bounded by a parameter $L$. 
The works by \citet*{OuyangYQCZL20} and \citet{Zhang0021} align with our aim to enhance robustness in AI systems. We provide combinatorial insights that might be useful to design new time-efficient heuristics.

\paragraph{$(L,f)$-Replacement Path Coverings.}
\citet{WY13} design the first data structure with sensitivity $f$ 
for \textsc{$L$-Hop Shortest Path}.
They used it as a building block in a \emph{distance sensitivity oracle} ($f$-DSO),
which reports length of general shortest paths in $G{-}\F$ without the hop-length constraint.
Their main tool was an $(L,f)$-\emph{replacement path covering},\footnote{
 	The name was introduced later by
 	\citet{KarthikParter21DeterministicRPC}.
 } or $(L,f)$-RPC, a family $\mathcal{G}$ of 
subnetworks of $G$ which, for every set $F \subseteq E$ with $|F| \le f$, has a subfamily $\mathcal{G}_F \subseteq \mathcal{G}$ such that (i) no subnetwork in $\mathcal{G}_F$ contains an edge in $F$ and (ii) for every pair of nodes $s,t \,{\in}\, V$, if $G{-}F$ contains an $L$-hop shortest $s$-$t$-path, then some subnetwork in $\mathcal{G}_F$ retains such a path.
That means, to find the true $s$-$t$-distance in $G{-}F$,
it is enough to consult the subnetworks in $\mathcal{G}_F$.
\citet{WY13} obtained their $(L,f)$-RPC using $O(fL^f \log n)$ copies of $G$ and, in each one,
removing any link independently with probability $1/L$.
Chernoff's bound shows that w.h.p.\footnote{
	We say an event occurs \emph{with high probability} (w.h.p.)
	if it has probability at least $1-n^c$ for a constant $c > 0$.
} there are $|\mathcal{G}_F| = O(fL^{f-|F|} \log n)$
subnetworks that do not contain any link of $F$ and such that, for every $s,t \in V$, at least one element in $\mathcal{G}_F$ retains an $L$-hop shortest $s$-$t$ path in $G-F$, if there is any.
Unfortunately, due to the randomness, the most efficient way to find $\mathcal{G}_F$ 
 is to go through all subnetworks in $\mathcal{G}$ individually,
taking time $O(f^2 L^f \log n)$.

\citet{KarthikParter21DeterministicRPC} derandomized this construction
via error-correcting codes. For $L \,{\ge}\, f$, their $(L, f)$-RPC has size $O(fL \log n)^{f+1}$, which is an \mbox{$O(f^f L \ (\log n)^f)$-factor} larger than the solution by Weimann and Yuster. 
The size of the relevant subfamily $\mathcal{G}_F$ is $\Otilde(fL)$,\footnote{The
$\Otilde(\cdot)$ notation hides
poly-logarithmic factors in the number $n$ of nodes of the network.
} independently of $|F|$.
The huge advantage of the deterministic construction is the much better query time.
They showed how to retrieve $\mathcal{G}_F$ in time $\Otilde(f^2L)$. 
A side-by-side comparison is given in Table~\ref{table:results_trees}.

\begin{table*}[t]
\centering
\begin{tabular}{cccc}

\textbf{Query Time} & \textbf{Space} & \textbf{Preprocessing Time} & \textbf{Reference} \\
\noalign{\hrule height 1pt}\\[-8pt]

 $\Otilde\!\left(\!\left(\frac{f+k}{f}\right)^f \!\left(\frac{f+k}{k}\right)^k \! fk \!\right)$ & 
 	$\Otilde\!\left(\!\left(\frac{f+k}{f}\right)^f \!\left(\frac{f+k}{k}\right)^k \! fk \!\right)$ & 
 	$\left(\frac{f+k}{f}\right)^f \!\left(\frac{f+k}{k}\right)^k \! f \cdot 2^k \, \poly(n)$ & \citet{Bilo22FixedParameterSensitivityOracles}\\[.25em]

 $f^2 \nwspace 2^k \, \poly(k)$ & $\Otilde(2^k \cdot n^2)$ & $2^k \, \poly(k) \cdot n^\omega$ & \citet{AlmanHirsch22ExteriorAlgebras} \\[.25em]
 
 $\Otilde\!\left(4^f f^{2-o(1)} \nwspace k^{o(1)}\right)$ & $\Otilde\!\left(\left(\frac{k+4}{f}\right)^{f+o(1)} f^{3/2} \nwspace k\right)$ & $\left(\frac{k+4}{f} \right)^{f+o(1)} f^{3/2} \cdot 2^k \,\poly(n)$ & Theorem~\ref{thm:k-path}
\end{tabular}
\caption{Comparison of fixed-parameter sensitivity oracles for the $k$-\textsc{Path} problem. 
	All results are randomized. The last row assumes $f = o(\log k)$. 
	The quantity $\omega < 2.371552$ is the matrix multiplication exponent.}
\label{table:results_k-path}
\end{table*}

While the randomized $(L, f)$-RPC is smaller than the deterministic one,
the latter has a much better query time. 
Thus, a natural question is whether one can design an $(L, f)$-RPC whose size is as small as the former and whose query time is at least as efficient as the latter.
We answer this question affirmatively.
The bottleneck of the construction by \citet{WY13} is not randomness,
but \emph{independence}.
This is the reason one has to scan all subnetworks in order to find $\mathcal{G}$.
Our construction can be seen as an indexing of the Weimann and Yuster graphs.
The crucial difference is that the subnetworks we generate are no longer independent.
Instead, the construction process naturally groups them into \emph{sampling trees}.
The query algorithm merely traces a root-to-leaf path in the tree by always choosing an arbitrary
child node whose subnetwork contains no link of $F$. 

\begin{theorem}
\label{thm:stand-alone-trees}
Let $G$ be a directed/undirected network with $n$ nodes, possibly weighted, and let 
$f = o(\log L)$. We can build a randomized $(L,f)$-replacement path covering $\mathcal{G}$ of size 
$\Otilde(fL^{f+o(1)})$ such that, given any $F\subseteq V \cup E$ with $|F| \leq f$, computes in time $\Otilde(f^{\frac{5}{2}}L^{o(1)})$ a collection $\mathcal{G}_F \subseteq \mathcal{G}$ of subnetworks satisfying the following properties w.h.p.: 
\begin{enumerate}[($i$)]
    \item $|\mathcal{G}_F|=\Otilde(fL^{o(1)})$;
    \item No subnetwork in $\mathcal{G}_F$ contains an element of $F$;
    \item For any two nodes $s,t \in V$ that admits an $L$-hop shortest $s$-$t$ path in $G-F$, there is at least one subnetwork in $\mathcal{G}_F$ that retains one of such paths.
	\end{enumerate}
\end{theorem}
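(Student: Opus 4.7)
The plan is to realize the RPC as a small collection of sampling trees, so that the subnetworks relevant to a query $F$ can be discovered by short root-to-leaf walks instead of by a scan of the whole family. I take $r = \Otilde(f)$ independent trees of depth $d$ and branching factor $b$, chosen so that $d,b = L^{o(1)}$ while $b^d = L^{f+o(1)}$; the hypothesis $f = o(\log L)$ makes such a choice feasible, e.g.\ $d = \Theta(\sqrt{f \log L})$ and $b = 2^d$. The root of each tree stores $G$; at an internal node $v$ of level $i$ with subnetwork $G_v$, each of its $b$ children stores a subnetwork obtained by independently retaining every edge of $G_v$ with probability $q_i$, where the schedule $(q_i)$ is calibrated so that the cumulative per-edge retention $\prod_{i=1}^{d} q_i$ equals the Weimann--Yuster marginal $1-1/L$. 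The overall number of stored subnetworks is $r b^d = \Otilde(fL^{f+o(1)})$. Node failures in $F$ are reduced to their incident edges in the usual way.

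On a query $F$, I process the trees independently by a greedy descent: starting at the root, inspect the children in a fresh random order and descend into the first one whose subnetwork makes progress toward becoming $F$-free; the leaf reached is added to $\mathcal{G}_F$. Property (ii) is immediate. For (iii), fix $s,t \in V$ and an $L$-hop shortest $s$-$t$-path $P$ in $G-F$; conditional on the descent in a given tree reaching a leaf that avoids $F$, every non-$F$ edge of $G$ is still independently present in that leaf with its original marginal, so $P$ survives there with probability at least $(1-1/L)^L \ge 1/e$. Independence across the $r = \Otilde(f)$ trees boosts the probability that some leaf of $\mathcal{G}_F$ retains $P$ to $1 - n^{-\Omega(1)}$, and a union bound over the $O(n^2)$ source-target pairs completes (iii). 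For the query time and output size, I would show that at every visited node a random child makes progress with probability at least $L^{-o(1)}$, so $L^{o(1)}$ samples per node suffice with high probability; multiplying by $d = L^{o(1)}$ levels, $r = \Otilde(f)$ trees, and a $\Otilde(f^{3/2})$ bookkeeping cost per sample (for maintaining a dynamic intersection structure between $F$ and the candidate subnetwork) gives the claimed $\Otilde(f^{5/2} L^{o(1)})$ query time and $\Otilde(fL^{o(1)})$ output size.

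The crux will be the calibration of $(q_i)$: it must simultaneously (a) hit the target leaf marginal $1-1/L$, (b) keep the per-node progress probability at least $L^{-o(1)}$ uniformly along the descent, and (c) respect $b^d = L^{f+o(1)}$. A uniform schedule $q_i \equiv q$ yields only $\Theta(k/(Ld))$ progress per level when $k$ edges of $F$ are still present, which is not $L^{-o(1)}$. The fix is to let $q_i$ depend on $i$ and to exploit that $|F \cap E(G_v)|$ is expected to shrink by $\Theta(f/d)$ per level down the descent, so the relevant exponents of $L$ stay $o(1)$; the hypothesis $f = o(\log L)$ is precisely what makes this possible. The remaining estimates are routine Chernoff concentration and union-bound arguments.
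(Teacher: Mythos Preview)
Your high-level architecture (trees of height $\Theta(\sqrt{f\log L})$, branching $L^{o(1)}$, $\Otilde(f\,L^{o(1)})$ independent copies) matches the paper, but the \emph{direction} in which you grow the subnetworks is reversed, and this is where the argument breaks. You put the full graph $G$ at the root and \emph{remove} edges going down, so the query has to gradually \emph{create} the property ``$F$ is absent''. The paper does the opposite: the root is the \emph{empty} graph, each child independently \emph{keeps removed} every edge that was removed in its parent with probability $p=L^{-1/h}$, and the query simply descends to any child in which all of $F$ is still removed. Because $|F|\le f$, this happens for a fixed child with probability $p^{f}=1/\alpha$, so among the $\alpha$ children one succeeds with probability at least $1-1/e$; the invariant ``$F$ is absent'' is therefore \emph{maintained deterministically} at every level, and the only randomness left at the leaf concerns the non-$F$ edges, giving the $(1-1/L)^{L}$ survival bound you also quote.

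Your direction cannot be calibrated to achieve the same thing, and you essentially discovered this yourself. Hitting the leaf marginal $\prod_i q_i = 1-1/L$ forces the total per-edge removal budget $\sum_i(1-q_i)\approx 1/L$. But ``$|F\cap E(G_v)|$ shrinks by $\Theta(f/d)$ per level'' requires $1-q_i$ of order $1/d$ on the levels where it matters, i.e.\ total budget $\Theta(1)$, which is incompatible with $1/L$; no level-dependent schedule $(q_i)$ can satisfy (a), (b), (c) simultaneously. Concretely, with any schedule meeting (a), the chance that a \emph{given} child drops even one $F$-edge is $O(f/(Ld))$, so with $b=L^{o(1)}$ children the per-level progress probability is $L^{-1+o(1)}$, not $L^{-o(1)}$, and the claimed query time collapses. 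Two smaller issues: reducing node failures to their incident edges blows $|F|$ up to $\Theta(fn)$ and destroys every bound that uses $|F|\le f$ (the paper instead samples \emph{vertices} in the node-failure variant); and $r=\Otilde(f)$ trees is not enough---the success probability per tree is only $\Theta((1-1/e)^{h})=L^{-o(1)}$, so you need $\Otilde(f\,L^{o(1)})$ trees, and the final union bound must range over all triples $(s,t,F)$, not just the $O(n^{2})$ pairs.
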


The number of subnetworks in our construction is only an $L^{o(1)}$-factor away
from the one by \citet{WY13}. The $\Otilde(f^{5/2}L^{o(1)})$-query time 
is even better then the one by \citet{KarthikParter21DeterministicRPC}.
The assumption $f = o(\log L)$ is to ensure that parameters of the $(L,f)$-RPC
scale only logarithmically in the size of the input network $G$
(hidden in the $\Otilde$-notation).
Our construction also works for larger sensitivities like, e.g., $f = o(\log(n)/\log\log n)$,
a common bound in the literature \cite{WY13}, \cite*{ChCoFiKa17},
\cite*{Bilo24AImprovedDO},
and even up to $f = o(\log n)$.
However, then the total number of subnetworks, the number of relevant subnetworks $|\mathcal{G}_F|$,
and the query time all increase by an $n^{o(1)}$-factor.

Next, we apply Theorem~\ref{thm:stand-alone-trees} to \textsc{$L$-Hop Shortest Path}.
\begin{theorem}
\label{thm:short-distance-exact}
	Let $G$ be a (directed) network with $n$ nodes and real edge weights that does not contain negative cycles.
	Let 
 $f = o(\log L)$.
	There exists a randomized $L$-hop distance  oracle with sensitivity $f$
	for pairwise $L$-hop shortest paths 
	that takes space $\Otilde(f L^{f+o(1)} n^2)$ 
	and has query time $\Otilde(f^{5/2}L^{o(1)})$.
	The oracle can be preprocessed in time
	$\Otilde(f L^{f+o(1)} \, T_{\text{APSP}})$,
	where $T_{\text{APSP}}$ is the time to compute all-pairs shortest paths in $G$. 
\end{theorem}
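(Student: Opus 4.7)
The plan is a direct reduction to the $(L,f)$-replacement path covering of Theorem~\ref{thm:stand-alone-trees}. The oracle has two layers: the RPC family $\mathcal{G}$ of $\Otilde(fL^{f+o(1)})$ subnetworks, and, for every $H \in \mathcal{G}$, a precomputed matrix $D_H$ of pairwise $L$-hop shortest-path distances in $H$. Each $H$ inherits from $G$ the absence of negative cycles, so $D_H$ is well-defined and can be computed within one APSP call on $H$ (for instance by running a Bellman–Ford-style $(\min,+)$-relaxation from every source and stopping after $L$ rounds). Summed over the family, this gives space $\Otilde(fL^{f+o(1)} n^2)$ and preprocessing time $\Otilde(fL^{f+o(1)} \, T_{\text{APSP}})$, matching the stated bounds.

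At query time, on input $F \subseteq V \cup E$ with $|F| \le f$ and a pair $s,t \in V \setminus F$, I would first invoke the query procedure of Theorem~\ref{thm:stand-alone-trees} to obtain $\mathcal{G}_F$ in time $\Otilde(f^{5/2}L^{o(1)})$, and then return
$\min \{D_H[s,t] : H \in \mathcal{G}_F\}.$
Since $|\mathcal{G}_F| = \Otilde(fL^{o(1)})$ and each lookup is $O(1)$, the aggregate query time is dominated by the RPC retrieval, giving the claimed $\Otilde(f^{5/2}L^{o(1)})$.

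Correctness follows at once from the two RPC guarantees. Property~(ii) of Theorem~\ref{thm:stand-alone-trees} ensures that every $H \in \mathcal{G}_F$ is a subgraph of $G{-}F$, so $D_H[s,t] \ge d_{G-F}^{\le L}(s,t)$ for all pairs. Property~(iii) ensures that whenever $G{-}F$ admits an $L$-hop shortest $s$-$t$-path, some $H^\star \in \mathcal{G}_F$ still contains one, which makes the inequality tight. The minimum therefore coincides with the true $L$-hop $s$-$t$-distance in $G{-}F$ with high probability.

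There is essentially no new technical obstacle; the work is already packed into Theorem~\ref{thm:stand-alone-trees}. The one point to be careful about is that the query pair $(s,t)$ is not known at preprocessing time, so the RPC must support all pairs simultaneously; fortunately property~(iii) is stated uniformly over all $s,t$, so a single invocation of the query routine per failure set $F$ serves an arbitrary number of subsequent distance queries at no extra cost. If one additionally wishes to report a witness path rather than only its length, each $D_H$ can be augmented with the standard predecessor pointers produced by the same $(\min,+)$-relaxation, preserving all stated bounds up to constants.
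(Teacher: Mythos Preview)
Your proposal is correct and follows essentially the same route as the paper: build the $(L,f)$-RPC of Theorem~\ref{thm:stand-alone-trees}, store an APSP table for every leaf subnetwork, and answer a query $(s,t,F)$ by retrieving $\mathcal{G}_F$ and returning the minimum stored $s$-$t$ distance. The only cosmetic difference is that the paper stores unrestricted APSP distances in each $G_x$ rather than $L$-hop-bounded ones, but either choice yields the same guarantees and the same bounds.
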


It has often been observed that real-world networks are modeled well by networks with small diameter, see the works of~\citet{watts_collective_1998, Albert1999Diameter, 
Adamic99thesmall}, the ``small-world'' networks by Kleinberg~\shortcite{kleinberg2000navigation}, Chung-Lu networks~\cite{chung2002average},
hyperbolic random networks~\cite{friedrich2018diameter}, and the preferential attachment model~\cite{Hofstad16RandomGraphs}.
Let $D$ be the diameter of an undirected, unweighted network $G$.
By the result of \citet*{Afek02RestorationbyPathConcatenation_journal} showing that
if $G{-}F$ is still connected, then its diameter is at most $(f{+}1)D$,
Theorem~\ref{thm:short-distance-exact} gives a very efficient $f$-DSO
for \emph{general} hop-lengths in networks with, say, polylogarithmic diameter. 

\begin{corollary}
\label{cor:bounded-diam-exact}
	Let $G$ be an undirected, unweighted network with $n$ nodes, $m$ links, and diameter $D = \omega(1)$.
	For any constant $f$, there exists a randomized distance oracle with sensitivity $f$
	that takes space $\Otilde(D^{f+o(1)} n^2)$ 
	and has query time $\Otilde(D^{o(1)})$.
	The oracle can be preprocessed in time
	$\Otilde(D^{f+o(1)} mn)$ or $\Otilde(D^{f+o(1)} n^\omega)$,
	where  $\omega < 2.371552$ is the matrix multiplication exponent.
\end{corollary}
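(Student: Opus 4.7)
The plan is to reduce the general distance oracle problem to $L$-\textsc{Hop Shortest Path} via the diameter bound of \citet{Afek02RestorationbyPathConcatenation_journal}. For any failure set $F \subseteq V \cup E$ with $|F| \leq f$, any two nodes $s,t$ that remain in the same connected component of $G{-}F$ satisfy $d_{G-F}(s,t) \leq (f{+}1)\,D$. Consequently, every shortest $s$-$t$-path in $G{-}F$ uses at most $L := (f{+}1)D$ links. If instead $s$ and $t$ lie in different components, then both $d_{G-F}(s,t)$ and the $L$-hop distance equal $+\infty$. Hence an $L$-hop distance oracle with sensitivity $f$ for this value of $L$ is identical to a full distance oracle with sensitivity $f$ on $G$.

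I would therefore instantiate Theorem~\ref{thm:short-distance-exact} with $L = (f{+}1)D$. Since $f$ is constant and $D = \omega(1)$, the hypothesis $f = o(\log L)$ is satisfied. Substituting $L = \Theta(D)$ into its bounds yields space $\Otilde(f L^{f+o(1)} n^2) = \Otilde(D^{f+o(1)} n^2)$ and query time $\Otilde(f^{5/2} L^{o(1)}) = \Otilde(D^{o(1)})$, exactly as stated. Correctness holds with high probability, inherited directly from Theorem~\ref{thm:short-distance-exact}.

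For the preprocessing time I bound $T_{\text{APSP}}$ on an undirected, unweighted graph in two ways: running breadth-first search from every source gives $O(mn)$, while algebraic APSP via Boolean matrix multiplication achieves $\Otilde(n^\omega)$. Plugging either into the $\Otilde(f L^{f+o(1)} \, T_{\text{APSP}})$ preprocessing bound of Theorem~\ref{thm:short-distance-exact} yields $\Otilde(D^{f+o(1)} mn)$ and $\Otilde(D^{f+o(1)} n^\omega)$ respectively. The only conceptual step is observing that the hop bound of \citet{Afek02RestorationbyPathConcatenation_journal} turns an $L$-hop oracle into a full-distance oracle whenever $L \geq (f{+}1)D$; after that, every quantity in the corollary falls out by direct substitution, so there is no real obstacle beyond invoking the previous theorem.
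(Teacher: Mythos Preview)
Your proposal is correct and matches the paper's own reasoning essentially verbatim: the paper derives the corollary directly from Theorem~\ref{thm:short-distance-exact} by invoking the \citet{Afek02RestorationbyPathConcatenation_journal} bound to set $L = (f{+}1)D$, and then substitutes using that $f$ is constant and $D = \omega(1)$. There is nothing to add.
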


\paragraph{On $k$-\textsc{Path} and $k$-\textsc{Clique}.}
The $k$-\textsc{Path} problem is \mbox{\textsf{NP}-complete} when $k$ is given as part of the input. 
If $k$ is treated as a parameter, however, then the problem turns out to be
\emph{fixed-parameter tractable} (\textsf{FPT}),
meaning that it is solvable in time $g(k) \cdot \poly(n)$ for some function $g$.
The current-best algorithms are randomized and run in time
$1.66^k \cdot \poly(n)$ for undirected networks~\cite*{Bjoerklund17NarrowSieves}
and $2^k \cdot \poly(n)$ for directed networks~\cite{Williams_2009}.
The best deterministic algorithm runs in time $2.554^k \cdot \poly(n)$~\cite{Tsur_2019}.

\citet*{Bilo22FixedParameterSensitivityOracles} introduced
\emph{fixed-parameter sensitivity oracles} for \textsf{FPT}-problems. In these oracles
the preprocessing time and space requirement must be of the form $g(f,k) \cdot \poly(n)$,
and the query time ought to be ``significantly faster''
than recomputing a solution from scratch.
They designed a fixed-parameter oracle with sensitivity $f$ for $k$-\textsc{Path} 
with a query time and space of $O((\frac{f+k}{f})^f (\frac{f+k}{k})^k fk \log n)$,
and a $(\frac{f+k}{f})^f (\frac{f+k}{k})^k f 2^k \cdot \poly(n)$ preprocessing time.
When $f \,{\ge}\, k$, 
\citet{AlmanHirsch22ExteriorAlgebras} significantly improved the query time to
$f^2 2^k \poly(k)$ randomized or $O(f^2 2^{\omega k})$ deterministic.
The space requirement is $O(2^k (\log k) n^2)$
in the randomized case and $O(2^k k n^2 \log n)$ for the deterministic version.
The preprocessing time is $2^k \poly(k) n^\omega$ (randomized)
or $4^k \poly(k) n^{\omega}$ (deterministic).
See Table~\ref{table:results_k-path} for an overview.

We consider the case where $k$ is much larger than $f$,
so our goal is to make the dependence on $k$ as small as possible.

\begin{theorem}
\label{thm:k-path}
	Let $G$ be a directed, unweighted network with $n$ nodes.
	Let $f$ and $k$ be two integer parameters with $f = o(\log k)$.
	There exists a randomized fixed-parameter oracle with sensitivity $f$ for \textsc{$k$-Path}
	that takes space $\Otilde((\frac{k+4}{f})^{f+o(1)} f^{3/2} \nwspace k)$ 
	and has query time $\Otilde(4^f f^{2-o(1)} \nwspace k^{o(1)})$ w.h.p.
	The oracle can be preprocessed in randomized time
	$(\frac{k+4}{f})^{f+o(1)} f^{3/2} \cdot 2^k \, \textup{\poly}(n)$.
	If the network is undirected, the preprocessing time decreases to
	$(\frac{k+4}{f})^{f+o(1)} f^{3/2} \cdot 1.66^k \, \textup{\poly}(n)$.
\end{theorem}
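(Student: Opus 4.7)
The strategy is to lift the $(L,f)$-replacement path covering of Theorem~\ref{thm:stand-alone-trees} from $L$-hop shortest paths to arbitrary simple $k$-paths and to combine it with the best known randomized $k$-\textsc{Path} algorithm, which is executed once per subnetwork during preprocessing. The first step is an observation about the underlying construction: each edge is retained independently with probability $1-\Theta(1/L)$, so any \emph{fixed} simple path of at most $L$ edges survives in each subnetwork with constant probability, not only the shortest one between its endpoints. I would instantiate Theorem~\ref{thm:stand-alone-trees} with $L = \Theta((k+4)/f)$, which balances the probability $L^{-f} = (f/(k+4))^f$ that all failures in $F$ are removed against the probability $(1-f/(k+4))^k = e^{-f+o(1)} = k^{-o(1)}$ that a given simple $k$-path of $G{-}F$ is retained; here I use the hypothesis $f=o(\log k)$ to absorb the $e^{-f}$ factor into $k^{-o(1)}$. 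A Chernoff bound together with a union bound over all failure sets $F$ then upgrades the shortest-path guarantee of Theorem~\ref{thm:stand-alone-trees} to: for every $F$ and every simple $k$-path in $G{-}F$, with high probability some subnetwork in $\mathcal{G}_F$ contains one such path. The resulting family size is $\Otilde(((k+4)/f)^{f+o(1)})$.

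For each subnetwork $H\in\mathcal{G}$ I would then run the $2^k \poly(n)$ randomized $k$-\textsc{Path} algorithm of Williams in the directed case, or the $1.66^k \poly(n)$ algorithm of Bj\"orklund et al.\ in the undirected case, and store a single Boolean indicating the outcome together with, optionally, an $O(k\log n)$-bit encoding of a witness path. Multiplying the per-subnetwork cost by the family size recovers the stated preprocessing time. The factor $k$ in the space bound is accounted for by the stored witness path, while the $f^{3/2}$ factor is inherited from the internal overhead of Theorem~\ref{thm:stand-alone-trees}'s sampling-tree construction. On a query $F$ I would invoke Theorem~\ref{thm:stand-alone-trees}'s retrieval routine to produce $\mathcal{G}_F$ and answer YES if and only if at least one of its $\Otilde(fk^{o(1)})$ entries has its stored bit set. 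A single RPC retrieval costs $\Otilde(f^{5/2}k^{o(1)})$; to reach the stronger query bound $\Otilde(4^f f^{2-o(1)} k^{o(1)})$ claimed in Theorem~\ref{thm:k-path}, I would additionally iterate over the $\binom{2f}{f} = \Theta(4^f/\sqrt{f})$ ways to partition the up-to-$f$ failure slots between node- and edge-types, issuing one RPC query per partition. This is what lets a purely edge-level indexing cope with node failures, whose naive expansion removes an uncontrolled number of incident edges.

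The main obstacle I anticipate is the rigorous promotion of Theorem~\ref{thm:stand-alone-trees}'s $n^2$-pair guarantee into a simultaneous guarantee over \emph{all} simple $k$-paths, without inflating the family size past $((k+4)/f)^{f+o(1)}$. The idea that saves the argument is existential: one need not retain every $k$-path of $G{-}F$, only ensure that some $k$-path survives whenever $G{-}F$ contains at least one. This collapses the number of events to union-bound over to essentially one per failure set, so the budget accommodates the additional factor of $f\log(|V|+|E|)$ without leaving the claimed space. The second technical subtlety --- accommodating node failures by splitting the failure budget between node- and edge-types at query time --- is where the factor $4^f$ in the query bound originates, and it will require verifying that each split can reuse the same indexing structure rather than forcing $\binom{2f}{f}$ fresh data structures in the space bound.
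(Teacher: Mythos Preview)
Your high-level idea---re-use the sampling trees with $L=\Theta(k/f)$, run a $k$-path algorithm once per leaf, and look up the stored answers in $\mathcal{G}_F$---is natural, but it is \emph{not} what the paper does, and it has a genuine gap in the space analysis.

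The claimed space in Theorem~\ref{thm:k-path} is $\Otilde\!\big((\tfrac{k+4}{f})^{f+o(1)} f^{3/2} k\big)$, i.e.\ independent of $|V|$ and $|E|$ up to polylogarithmic factors; this is the defining property of a \emph{fixed-parameter} oracle. Your proposal stores one bit (and a witness) per leaf, which is fine, but the query procedure of Theorem~\ref{thm:stand-alone-trees} must still test $F\subseteq A_y$ at every internal tree node, and for that you need the sets $A_y$. Already at depth~$1$ the expected total size $\sum_y |A_y|$ is $\alpha\cdot p\,|E|=|E|/p^{f-1}\ge |E|$, so the data structure is $\Omega(m)$ no matter how you compress the leaves. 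This is exactly why the paper \emph{cannot} use Theorem~\ref{thm:stand-alone-trees} as a black box. Instead, it augments every tree node $x$ with a second set $S_x\subseteq E$ built from $O(4^f\alpha^{h-r}\log h)$ runs of the $k$-path algorithm on carefully sampled subgraphs; the query then tests $F\cap S_x\subseteq A_y$, and one stores only $B_y=S_x\cap A_y$. Since every $S_x$ is a union of at most $4^f\alpha^{h-r}\log h$ many $k$-paths, $|B_y|,|S_y|\le 4^f\alpha^{h-r}k\log h$, which is what makes the space graph-size-independent.

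Your explanation of the $4^f$ factor is also off. It has nothing to do with splitting the failure budget between node and edge types (the paper treats the two cases uniformly, as in the discussion after Lemma~\ref{lem:tree-property_node_failure}). The $4^f$ is the number of independent sampling rounds used to build each $\mathcal{P}_y$: one needs both $F\cap S_x\subseteq I$ (probability $\ge 2^{-f}\alpha^{-(h-r-1)}$) and $E(P)\cap I=\emptyset$ (probability $\ge 2^{-f}$), so $4^f\alpha^{h-r-1}\ln h$ rounds are required. It then reappears in the query time because at a leaf one scans $|\mathcal{P}_x|=O(4^f\log h)$ stored paths. Finally, note that you cannot literally ``invoke Theorem~\ref{thm:stand-alone-trees}'' with $L=(k{+}4)/f$: its guarantee concerns paths with at most $L$ edges, while a $k$-path has $k\gg L$ edges, so the survival probability at a leaf drops from $(1-1/L)^L\ge 1/4$ to $(1-f/k)^k\approx e^{-f}$; this must be folded into the choice of $K$, not assumed away.
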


During the analysis, it becomes apparent that the path structure
is not actually needed, only the fact that the solution has $k$ links.
This allows us to extend sensitivity oracles even to networks motifs
that are believed not to have an \textsf{FPT}-algorithm,
like \textsc{$k$-Clique}.
It is widely believed \textsc{$k$-Clique} is not solvable in time $g(k) \cdot \poly(n)$
as the clique detection problem is \textsf{W}[1]-complete.
The current best algorithm by \citet{NesetrilPoljak85SubgraphProblem}
computes a clique of $k$ nodes in time $O(n^{\omega k/2})$.
We use it to design the first fixed-parameter sensitivity oracle for \textsc{$k$-Clique}.

\begin{theorem}
\label{thm:k-clique}
	Let $G$ be an undirected, unweighted network with $n$ nodes.
	Let $f$ and $k$ be two integer parameters with $f = o(\log k)$.
	There exists a randomized oracle with sensitivity $f$ for \textsc{$k$-Clique}
	that takes space $\Otilde((\frac{k^2+4}{f})^{f+o(1)} f^{3/2} \nwspace k^2)$ 
	and has query time $\Otilde(4^f f^{2-o(1)} \nwspace k^{o(1)})$ w.h.p.
	The oracle can be preprocessed
	$\Otilde((\frac{k^2+4}{f})^{f+o(1)} f^{3/2} \cdot n^{\omega k/2})$.
\end{theorem}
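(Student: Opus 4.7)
The plan is to imitate the proof of Theorem~\ref{thm:k-path} almost verbatim, exploiting the observation already flagged in the discussion: the $k$-\textsc{Path} oracle never uses that the target is a path, only that it is a subgraph with a bounded number of links. Since a $k$-clique has exactly $\binom{k}{2}\le k^2/2$ links, I would re-run the same framework with the cut-off parameter rescaled so that the role of $k$ in the $k$-\textsc{Path} bounds is played by $\binom{k}{2}$ in the $k$-\textsc{Clique} bounds; this substitution is what turns every $(k{+}4)/f$ from Theorem~\ref{thm:k-path} into the $(k^2{+}4)/f$ appearing in Theorem~\ref{thm:k-clique}, and it also accounts for the $k^2$ term in the storage bound, since I would store an $O(k^2)$-size clique witness per subnetwork in place of the $O(k)$-size path witness.

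Concretely, I would first invoke Theorem~\ref{thm:stand-alone-trees} to build an $(L,f)$-replacement path covering $\mathcal{G}$ with $L$ of order $(k^2{+}4)/(f{+}1)$. Then I would verify that $\mathcal{G}$ retains $k$-cliques in the same sense in which Theorem~\ref{thm:stand-alone-trees} retains $L$-hop shortest paths: because the underlying sampling-tree analysis depends only on the number of links in the target subgraph, for any failure set $F$ of size $\le f$ and any $k$-clique $K\subseteq G{-}F$ the same Chernoff/union-bound argument produces, w.h.p., either a single subnetwork of $\mathcal{G}_F$ that contains all of $K$, or $f{+}1$ cooperating subnetworks whose restrictions to $K$ cover its edges in $f{+}1$ partition-pieces, exactly as in the $k$-\textsc{Path} construction. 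During preprocessing I would then run the Nešetřil–Poljak $O(n^{\omega k/2})$-time algorithm on every subnetwork of $\mathcal{G}$ to detect (and record) a $k$-clique, obtaining the stated preprocessing time and space. At query time I would call the $\Otilde(f^{5/2}L^{o(1)})$ retrieval procedure of Theorem~\ref{thm:stand-alone-trees} to obtain $\mathcal{G}_F$ and then scan the stored bits; the $4^f f^{2-o(1)} k^{o(1)}$ factor in the query time comes from combining evidence across the $f{+}1$-way partition in the same manner as for $k$-\textsc{Path}.

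The main obstacle, in my view, is the generic-retention step: I need to re-examine the sampling-tree argument of Theorem~\ref{thm:stand-alone-trees}, which is phrased for $L$-hop shortest paths, and confirm both that its concentration bounds remain valid when the target is an arbitrary $\le L$-edge subgraph and that the partitioning trick used in Theorem~\ref{thm:k-path} (which allows $L$ to be $\Theta(k^2/f)$ rather than $\Theta(k^2)$) transfers to cliques, whose edge set admits many natural $f{+}1$-piece decompositions into bounded-size chunks. A further subtlety is that the union bound must be taken only over the $\Otilde((n{+}m)^f)$ failure sets $F$, not over the $\binom{n}{k}$ candidate cliques: for each $F$ we fix any single $k$-clique in $G{-}F$ and argue that this specific clique is retained. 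Once these points are settled, the space, preprocessing, and query-time bounds follow mutatis mutandis from the parameter computation already carried out in the proof of Theorem~\ref{thm:k-path}.
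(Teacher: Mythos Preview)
Your opening paragraph is exactly the paper's argument: the $k$-\textsc{Path} analysis in Section~\ref{sec:k_path} never uses that the target is a path, only that it has a given number of links, so one replaces $k$ by $\binom{k}{2}$ throughout and swaps Williams' $2^k\,\poly(n)$ algorithm for the Ne\v{s}et\v{r}il--Poljak $O(n^{\omega k/2})$ clique routine. That is all the paper does for Theorem~\ref{thm:k-clique}.

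However, your second and third paragraphs misdescribe the $k$-\textsc{Path} machinery you are reusing, and if you actually followed the plan written there it would not go through. The $k$-\textsc{Path} oracle does \emph{not} invoke Theorem~\ref{thm:stand-alone-trees} as a black box with $L=\Theta(k/(f{+}1))$, and there is no ``$f{+}1$-way partition'' or ``cooperating subnetworks whose restrictions cover the edges in pieces'' anywhere in the construction. The $(k/f)^f$ in place of $k^f$ arises for a different reason: the $k$-\textsc{Path} section \emph{modifies} the sampling trees by attaching to every tree node a second set $S_x$ and a path collection $\mathcal{P}_x$, and chooses different parameters $p=(f/k)^{1/h}$, $\alpha=(k/f)^{f/h}$, $h=\sqrt{f\ln(k/f)}$. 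The correctness analysis goes via the ``well-behaved'' properties (P1)--(P3) and Lemma~\ref{lem:well-behaved}, not via an RPC-plus-partition argument. The $4^f$ in the query time comes from the $4^f\alpha^{h-(r+1)}\ln h$ rounds used to build each $\mathcal{P}_y$, not from recombining partition pieces.

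For Theorem~\ref{thm:k-clique} you should therefore carry over this modified-tree construction verbatim, replacing $k$ by $\binom{k}{2}$ (the number of clique edges) in the parameters, storing a clique witness in each $\mathcal{P}_y$, and running Ne\v{s}et\v{r}il--Poljak in place of the $k$-path algorithm. Your worry about decomposing cliques into $f{+}1$ bounded pieces is then moot: no such decomposition is needed.
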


\paragraph{Outline.}
We first discuss replacement path coverings in general in the next section, 
followed by their most common application in $L$-hop distance sensitivity oracles. 
Afterwards, we show that similar ideas can also be employed for $k$-paths, $k$-cliques, and other graph motifs.
We conclude this work by discussing some open questions.

\section{The Main Tool: Sampling Trees}
\label{sec:sampling_trees}

We turn the idea of removing links to create an $(L,f)$-replacement path covering upside down.
We build a \emph{sampling tree} having the empty subnetwork 
as its root and, in each level, any child node takes all the links of its parent
and randomly \emph{adds} new links from $G$ with some probability depending on $f$ and $L$, and on an additional parameter $h$
that controls the height of the trees.
The probability is fine-tuned in such a way that, among the leaves, the likelihood for any link to exist
is precisely the same as it was in the construction of \citet{WY13}.
This brings the number of subnetworks back down to $O(fL^{f+o(1)} \log n)$.

The crucial difference is that the stored subnetworks are no longer independent.
Two nodes of the same tree always share at least all links that are stored in their lowest common ancestor.
This arrangement of the subnetworks allows for a very efficient query algorithm
even though the construction is randomized.
Given a query $F\subseteq E$ with $|F|\leq f$, we navigate the sampling tree starting from the root and, in each level, we move to an \emph{arbitrary} child node whose stored subnetwork does not have any link of $F$.
The flip side of such a simple procedure is that the leaf node we reach may be relevant for $F$ only with a small probability; as it will turn out, exponentially small.
In turn, it depends only on the parameter $h$ and not on $f$, $L$, or the network size.
Optimizing $h$ and repeating the query in sufficiently many independent trees
ensures a high success probability.

\paragraph{Detailed Construction.}
Let $f, L$ be positive integers that may depend on $n$.
The \emph{sensitivity} $f$ is assumed to be much smaller than the cut-off parameter $L$,
namely, $f  = o(\log L)$.
We now implement the data structure that preserves paths with hop-length most $L$
against up to $f$ failing links or nodes in the network.
We focus first on fault tolerance against \emph{link failures}
and later extend it to node failures.
For now,
let $F \subseteq E$ with $|F|\leq f$.
We construct a collection of $K$ sampling trees whose nodes all hold a subnetwork of $G$.
Each sampling tree has height $h$ and any internal node has exactly $\alpha$ children.
$K$, $h$, and $\alpha$ are parameters to be optimized later.

A single tree has $\alpha^h$ leaves and $O(\alpha^h)$ nodes in total.
We associate with each node $x$ a set $A_x \subseteq E$.
Intuitively, $A_x$ are the links that are \emph{missing} in the network stored in $x$;
equivalently, node $x$ holds the network $G_x = G{-}A_x$.
If $x$ is the root of the tree, we set $A_x = E$ so that the corresponding network 
is the empty network (on the same node set $V$).
Now let $y$ be a child of some node $x$,
its set $A_y \subseteq A_x$ is obtained by selecting each link in $A_x$
independently with probability $p$.
This construction is iterated until the tree has height $h$.
In the same fashion, we build all the sampling trees $T_0, T_1, \dots, T_{K-1}$ 
where each random choice along the way is made independently of all others. The total number of stored subnetworks is $O(K \alpha^h)$ and the $f$-covering of $G$ is given by the family $\mathcal{G}$ of all subnetworks that are stored in the leaves of all the $K$ sampling trees. 

We will heavily use the fact that
the link distribution in the random sets $A_x$ only depends on the depth $r$ of the node $x$.
Moreover, even though two different nodes are not independent,
the links present in a single node indeed are.
The following lemma has a simple proof by induction over $r$.

For a positive integer $\ell$, we denote $\{0,\ldots,\ell-1\}$ by $[\ell]$.

\begin{lemma}
\label{lem:tree-property}
	Let $i \in [K]$ be an index and $x$ a node of the tree $T_i$ at depth $r \ge 0$.
	For any $e \in E$, we have $\prob[e\in A_x]=p^r$.
	Moreover, for any two different links $e,e' \in E$, 
	the events $e \in A_x$ and $e' \in A_x$ are independent.
\end{lemma}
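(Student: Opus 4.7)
The plan is to prove both claims simultaneously by induction on the depth $r$ of $x$ in the tree $T_i$. The base case $r=0$ is the root, where by construction $A_x = E$ deterministically; hence $\Pr[e \in A_x] = 1 = p^0$ for every $e \in E$, and any two such events are trivially independent (each has probability $1$).

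For the inductive step, let $y$ be a child of some node $x$ at depth $r$, so $y$ has depth $r+1$. By construction, $A_y$ is formed from $A_x$ by retaining each link of $A_x$ independently with probability $p$. Introduce, for each $e \in E$, an independent Bernoulli variable $X_e^{(x \to y)}$ with $\Pr[X_e^{(x \to y)} = 1] = p$, drawn independently of everything computed so far in any of the trees $T_0,\dots,T_{K-1}$; then $e \in A_y$ if and only if $e \in A_x$ and $X_e^{(x \to y)} = 1$. Since the coin flip $X_e^{(x \to y)}$ is independent of the random set $A_x$, the inductive hypothesis yields
\[
\Pr[e \in A_y] \;=\; \Pr[e \in A_x]\cdot\Pr[X_e^{(x \to y)} = 1] \;=\; p^r \cdot p \;=\; p^{r+1},
\]
which proves the first part.

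For the independence claim at depth $r+1$, fix two distinct links $e, e' \in E$. Then
\[
\Pr[e \in A_y \wedge e' \in A_y] \;=\; \Pr[e \in A_x \wedge e' \in A_x]\cdot\Pr[X_e^{(x \to y)} = 1]\cdot\Pr[X_{e'}^{(x \to y)} = 1],
\]
using that the two coin flips $X_e^{(x \to y)}$ and $X_{e'}^{(x \to y)}$ are independent of each other and of $A_x$. By the inductive independence assumption the first factor equals $p^r \cdot p^r$, so the whole product equals $p^{r+1}\cdot p^{r+1} = \Pr[e \in A_y]\cdot\Pr[e' \in A_y]$, as required.

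There is no real obstacle here; the only point that warrants care is keeping the conditioning clean, that is, making explicit that the per-edge Bernoulli draws from $A_x$ to $A_y$ are jointly independent both of each other (needed for the independence claim) and of the entire history that produced $A_x$ (needed to factor the probability into the inductive probability on $A_x$ times the fresh coin flip). Once that is spelled out, the recursion closes in one line.
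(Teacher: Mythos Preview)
Your proof is correct and matches the paper's intended approach: the paper itself merely remarks that the lemma ``has a simple proof by induction over $r$'' without spelling it out, and your induction on the depth with fresh independent Bernoulli coins at each parent-to-child step is exactly that argument.
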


\begin{algorithm}[t!b]
\caption{Query algorithm.}
\label{alg:tree-exact}
	$\mathcal{G}_F \gets \emptyset$\;
	\For{$i=0$ \KwTo $K{-}1$}
	{
		$x \gets$ root of $T_i$\;
		\While{$x$ is not a leaf}
		{
			\emph{noChildFound} $\gets$ \textsc{true}\;
			\ForAll{children $y$ of $x$}
			{
				\If{$F \subseteq A_y$ \label{line:containment}}
				{
					$x \gets y$\;
					\emph{noChildFound} $\gets$ \textsc{false}\;
					
					\Break inner for-loop\;
				}
			}
			\If{noChildFound}
			{
				\Continue outer for-loop\; \label{line:continue}
			}
		}
		$\mathcal{G}_F \gets \mathcal{G}_F \cup \{G_x\}$\; \label{line:result}
	}
\end{algorithm}

\paragraph{Query Algorithm.}
For the data structure to be efficient,  we need to quickly find $\mathcal{G}_F$ 
when given a query set $F\subseteq E$ with $|F|\leq f$.
The procedure is summarized in Algorithm~\ref{alg:tree-exact}.
Recall that the set $A_x \subseteq E$ contains those links that are \emph{removed} 
in the subnetwork $G_x$ of the node $x$. 
The trees $T_0, \dots, T_{K-1}$ are searched individually, starting in the respective roots.
In each step, the algorithm always chooses some (any) child $y$ of the current node $x$ 
with $F \subseteq A_x$, and continues the search there.
If no such $y$ exists, the current tree is abandoned.
Once a leaf is reached, the subnetwork stored there is added to $\mathcal{G}_F$.
Up to $K$ subnetworks relevant for the query set $F$ are collected in total time $O(fK\alpha h)$
as every node has $\alpha$ children and the trees have height $h$.

\paragraph{Analysis.}
Before we optimize the parameters $K$, $\alpha$, $h$, and $p$,
we show the correctness of the query algorithm.
It is clear that no network in $\mathcal{G}_F$ contains a failing link from $F$
since it is explicitly verified that $F \subseteq A_y$ before recursing to $y$.
It could be that $\mathcal{G}_F$ remains empty.
This happens if, for every single tree,
the outer for-loop of Algorithm~\ref{alg:tree-exact} is continued in line~\ref{line:continue}
since a node is encountered whose children all have $F \nsubseteq A_y$.
As part of the correctness proof, we bound the probability of this event. In the following, given two nodes $s,t \in V$ and $F\subseteq E$, we denote by $\pi(s,t,F)$ a shortest path from $s$ to $t$ in $G-F$, a.k.a. a \emph{replacement path}.

\begin{lemma}
\label{lem:parent-to-child}
	Let $i \in [K]$, $F \subseteq E$ with $|F| \le f$,
	and  $s,t \in V$  such that $\pi=\pi(s,t,F)$ contains at most $L$ links.
	\begin{enumerate}[($i$)]
		\item Algorithm~\ref{alg:tree-exact} reaches a leaf of the sampling tree $T_i$
			with probability at least $((1-(1-p^f)^{\alpha})^h$.
		\item If Algorithm~\ref{alg:tree-exact} reaches a leaf $x$ of $T_i$,
			then the probability of $\pi$ 
            existing in $G_x$ is at least $(1-p^h)^L$.
	\end{enumerate}
\end{lemma}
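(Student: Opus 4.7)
My plan is to prove the two claims separately, relying on Lemma~\ref{lem:tree-property} together with the fact that $A_y$ is obtained from $A_x$ by independently retaining each link with probability $p$, both across different links and across the $\alpha$ children of $x$.

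For (i), I will argue inductively along the algorithm's descent, maintaining the invariant that, whenever the algorithm is positioned at a node $x$, we have $F \subseteq A_x$. The invariant holds at the root since $A_{\text{root}} = E \supseteq F$, and Algorithm~\ref{alg:tree-exact} only moves to a child $y$ once it has verified $F \subseteq A_y$ in line~\ref{line:containment}. Conditioned on $F \subseteq A_x$, each of the $\alpha$ children $y$ independently satisfies $F \subseteq A_y$ with probability $p^{|F|} \ge p^f$, because the sub-sampling is independent across the links of $F$. Hence the probability that at least one child passes the check is at least $1-(1-p^f)^\alpha$, and chaining this bound over the $h$ descent steps via the tower rule for conditional probability yields (i).

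For (ii), the key observation is that Algorithm~\ref{alg:tree-exact} inspects only indicators of the form $F \subseteq A_y$, so its trajectory through $T_i$ is a deterministic function of the sub-sampling decisions that pertain to the links in $F$ alone. Since $\pi$ is a path in $G{-}F$, its edge set $E(\pi)$ is disjoint from $F$, so the sub-sampling decisions governing whether links of $E(\pi)$ end up in $A_x$ are independent of the trajectory. Consequently, conditional on the algorithm reaching the leaf $x$, each link $e \in E(\pi)$ still has $\prob[e \in A_x] = p^h$, and these events are mutually independent by Lemma~\ref{lem:tree-property}. Since $|E(\pi)| \le L$, the probability that every link of $\pi$ avoids $A_x$, i.e., that $\pi$ exists in $G_x = G{-}A_x$, is at least $(1-p^h)^L$.

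The main subtlety will be the conditioning in (ii): it is not enough to condition merely on $F \subseteq A_x$, because reaching $x$ also reveals information about the sibling subtrees encountered during the descent. To make the independence argument rigorous, I will model the construction using independent Bernoulli$(p)$ random variables $X_{uv,e}$ indexed by tree edges $u \to v$ and network links $e$, with $e \in A_v$ iff $X_{uv,e} = 1$ for every tree edge on the root-to-$v$ path. The algorithm's trajectory is then measurable with respect to the bits $\{X_{uv,e} : e \in F\}$, so the conditional distribution of $\{X_{uv,e} : e \in E \setminus F\}$ given the trajectory equals their unconditional product distribution. Part (i) carries no analogous difficulty once the invariant $F \subseteq A_x$ along the descent is identified.
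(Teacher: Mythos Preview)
Your proposal is correct and follows essentially the same approach as the paper: for (i) you maintain the invariant $F\subseteq A_x$ and bound each parent-to-child step by $1-(1-p^f)^\alpha$, then chain over the $h$ levels; for (ii) you invoke Lemma~\ref{lem:tree-property} to get $\prob[e\in A_x]=p^h$ for each $e\in E(\pi)$ and multiply over at most $L$ independent events. The one place you go beyond the paper is in making the conditioning in (ii) explicit---the paper simply cites Lemma~\ref{lem:tree-property} for the unconditional probability $p^h$ without spelling out why conditioning on the event ``the algorithm reaches leaf $x$'' leaves the distribution of $\{e\in A_x : e\notin F\}$ unchanged, whereas your Bernoulli-variable decomposition $X_{uv,e}$ and the observation that the trajectory is measurable with respect to $\{X_{uv,e}:e\in F\}$ makes this rigorous.
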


\begin{proof}
	To prove Clause~$(i)$, we first establish the following claim.
	Let $x$ be an inner node of $T_i$ with $F \,{\subseteq}\, A_x$
	and $y_0, \dots, y_{\alpha-1}$ its children.
	Then, the probability that a child has $F \,{\subseteq}\, A_{y_j}$ 
	is at least $1{-}(1{-}p^f)^\alpha$.
	Any set $A_{y_j}$ is obtained from $A_x$ by sampling each link independently with probability $p$.
	Since $F\,{\subseteq}\, A_x$, we have $\prob[F \,{\subseteq}\, A_{y_j}] = p^{|F|} \ge p^f$ and
	$\prob[\, \exists j \in [\alpha] \colon F \subseteq A_{y_j}]
			=1-\prod_{j=0}^{\alpha-1} \prob[F \nsubseteq A_{y_j}]$.
	So the latter probability is lower bounded by $1-\left(1-p^f \right)^{\alpha}$.
	
	The derivation depends only the condition $F \subseteq A_x$ and not on the path through $T_i$
	by which the query algorithm reaches the node $x$.
	Since Algorithm~\ref{alg:tree-exact} maintains this condition,
	we can iterate that argument for each of the $h$ parent-child transitions,
	which proves Clause~$(i)$.
	
	For Clause~$(ii)$, consider the algorithm reaching a leaf $x$ of $T_i$ at depth $h$.
	Evidently, we have $F \subseteq A_x$, but for all other links $e \in E{\setminus}F$,
	it holds that $\prob[e \in A_x] = p^h$ by Lemma~\ref{lem:tree-property}.
	The replacement path $\pi$ survives in $G_x$ with probability
	$\prob[\nwspace E(\pi) \cap A_x = \emptyset] = (1-p^h)^{|E(\pi)|} \ge (1-p^h)^L$.
\end{proof}

\paragraph{Optimizing the Parameters.}
The number of subnetworks is $O(K \alpha^h)$ out of which the query algorithm selects at most $K$
in time $O(fKah)$. This incentives us to choose all those parameters as small as possible, 
especially the height $h$ of the sampling trees.
Moreover, the probability to reach a leaf of a tree is exponentially small in $h$ 
(Lemma~\ref{lem:parent-to-child}~($i$)).
However, once a leaf is actually reached, the probability of the stored network holding
a relevant replacement path $\pi(s,t,F)$ \emph{grows} exponentially with $h$.
We need to cover \emph{all} pairs $s,t \in V$ that have a shortest path in $G{-}F$ with at most $L$ links.
Our strategy for setting the parameters is to keep the height small and instead 
boost the success probability by choosing a larger branching factor $\alpha$ 
and number of independent trees $K$, and balance the selection with a suitable sampling probability $p$.
Let $c \,{>}\, 0$ be a sufficiently large constant.
We set the parameters as
$h = \sqrt{f \ln L}$, $K = c \nwspace (\tfrac{e}{e-1})^h \nwspace f \ln n$, $\alpha = L^{f/h}$, and $p = L^{-1/h}$.

Lemma~\ref{lem:tree-property} states that in any leaf $x$, at depth $h$,
the probability for any link to be removed (i.e., $e \in A_x$) is $p^h = 1/L$.
Not coincidentally, this is the same probability used by \citet{WY13}.
We verify next that the query algorithm indeed finds a suitable collection of networks.
The lemma also implies that the networks stored in the leaves of the trees
form an $(L,f)$-replacement path covering w.h.p.

\begin{lemma}
\label{lem:correctness}
	W.h.p.\ over all $F \subseteq E$ with $|F| \le f$
	and $s,t \in V$ with $|E(\pi(s,t,F))| \le L$,
	after the termination of Algorithm~\ref{alg:tree-exact},
	the path $\pi(s,t,F)$ exists in a network of $\mathcal{G}_F$. 
\end{lemma}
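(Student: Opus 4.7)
The plan is to bound the failure probability for a fixed admissible triple $(F,s,t)$---with $F\subseteq E$, $|F|\le f$, and $\pi=\pi(s,t,F)$ having at most $L$ links---and then apply a union bound. The number of such triples is at most $\binom{m}{\le f}\cdot n^2 = n^{O(f)}$, so it suffices to show that the probability no leaf network in $\mathcal{G}_F$ retains $\pi$ is $n^{-\Omega(f)}$.

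For a single sampling tree $T_i$ I would chain the two clauses of Lemma~\ref{lem:parent-to-child}. Letting $x$ be the (random) leaf reached in $T_i$, the chain rule gives
\[
\Pr\!\bigl[\text{a leaf is reached in }T_i\text{ and }\pi\in G_x\bigr]
\;\ge\; \bigl(1-(1-p^f)^\alpha\bigr)^h \cdot (1-p^h)^L.
\]
Substituting $p=L^{-1/h}$ and $\alpha=L^{f/h}$ gives $\alpha p^f = 1$, and the standard estimate $(1-1/\alpha)^\alpha\le 1/e$ yields $\bigl(1-(1-p^f)^\alpha\bigr)^h \ge \bigl(\tfrac{e-1}{e}\bigr)^h$. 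Since $p^h=1/L$, we also have $(1-1/L)^L\ge 1/4$ for $L\ge 2$. The per-tree success probability is therefore at least $\tfrac14\bigl(\tfrac{e-1}{e}\bigr)^h$.

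The $K$ trees are constructed with mutually independent randomness, so the probability of failing in all of them is at most $\exp\!\bigl(-\tfrac{K}{4}(\tfrac{e-1}{e})^h\bigr)$. Plugging in $K=c(\tfrac{e}{e-1})^h f\ln n$ collapses this to $n^{-cf/4}$, and combining it with the union bound over the $n^{O(f)}$ admissible triples yields an overall failure probability of $n^{-\Omega(f)}$, provided the constant $c$ hidden in $K$ is taken sufficiently large. The main delicate point is making sure Clauses~(i) and~(ii) of Lemma~\ref{lem:parent-to-child} compose correctly under the chain rule---the event in Clause~(ii) is conditioned precisely on the event of Clause~(i), so the two bounds multiply---and verifying that in the regime $f=o(\log L)$ the branching factor $\alpha=L^{\sqrt{f/\ln L}}$ stays at least $2$ so that the estimate $(1-1/\alpha)^\alpha\le 1/e$ is substantive. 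The rest is pre-balanced arithmetic baked into the choice of $K$, $\alpha$, $h$, and $p$.
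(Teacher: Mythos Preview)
Your proof is correct and follows essentially the same route as the paper: invoke the two clauses of Lemma~\ref{lem:parent-to-child}, multiply them via the chain rule, plug in the chosen parameters to obtain a per-tree success probability of $\tfrac{1}{4}\bigl(\tfrac{e-1}{e}\bigr)^h$, amplify over the $K$ independent trees, and finish with a union bound over $O(n^{2+2f})$ triples. Your side remark about needing $\alpha\ge 2$ is overcautious---the inequality $(1-1/\alpha)^\alpha\le 1/e$ holds for all $\alpha\ge 1$---but this does no harm.
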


\begin{proof}
	The proof heavily relies on the estimates derived in Lemma~\ref{lem:parent-to-child}.
	The probability to reach a leaf $x$ in some tree $T_i$ whose corresponding network $G_x$
	contains the replacement path $\pi(s,t,F)$ is at least $((1-(1-p^f)^{\alpha})^h \cdot (1-p^h)^L$.
	We estimate the two factors separately starting with the second.
	Inserting the parameters gives $(1-p^h)^L \ge (1-L^{-1})^L \ge \frac{1}{4}$.
	Observe that 
	$1- \left(1-p^f \right)^{\alpha} = 1-\left(1-\frac{1}{L^{f/h}} \right)^{L^{f/h}} 
			\ge 1- \frac{1}{e}$.
	The total probability is thus at least
	$\frac{1}{4}(1{-}\frac{1}{e})^h = \frac{1}{4}(\frac{e-1}{e})^h$.
	Repeating the query in $K = c \nwspace (\tfrac{e}{e-1})^h \nwspace f \ln n$
	independent trees reduces the \emph{failure} probability for any triple $(s,t,F)$ to
	$\left(1- \frac{1}{4}\left(\frac{e-1}{e} \right)^h \right)^{\! c(\tfrac{e}{e-1})^h f \ln n}
			\le e^{-\frac{c}{4} f \ln n} = n^{-\frac{c}{4}f}$.
	A union bound over the at most $|V^2 \times \binom{E}{\le f}| = O(n^{2+2f})$ triples 
	shows that the failure probability of the whole algorithm is of order 
	$O(n^{2+2f-\frac{c}{4}f})$.
	Choosing a sufficiently large constant $c$ thus ensures a high success probability.
\end{proof}

We are left to compute the number of networks and query time. 
Let $C$ abbreviate $\frac{e}{e-1} \approx 1.582$
and note that
	$C^h = C^{\sqrt{f \ln L}} = \big(L^{\frac{\ln C}{\ln L}} \big)^{\sqrt{f \ln L}}
		= L^{\ln C \sqrt{\frac{f}{\ln L}}} = L^{o(1)}$.
The last estimate uses $f = o(\log L)$.
Similarly, we have $\alpha = L^{f/h} = L^{\sqrt{f/\ln L}} = L^{o(1)}$.
Our choice of parameters thus implies that the whole data structure stores 
$O(K \alpha^h) = O(f C^h L^f \log n) = O(f L^{f+o(1)} \log n)$ networks
and computes a subfamily of $K = O(fL^{o(1)} \log n)$ of them
that are relevant for the failure set $F$ in time 
$O(fK\alpha h) = O(f (L^{o(1)})^2 f (\log n) \sqrt{f \log L}) = O(f^{5/2} L^{o(1)} \log n)$.
This proves Theorem~\ref{thm:stand-alone-trees} for the case of link failures.

If the sensitivity is up to $f = o(\log n)$,
then we have $h = \sqrt{f \ln L} = o(\log n)$ since $L$ is at most $n$.
This results in a total number of subnetworks of $O(f C^h L^f \log n) = L^f n^{o(1)}$,
$|\mathcal{G}_F| = O(f C^h \log n) = n^{o(1)}$ of which are relevant for a given query,
and a query time of $O(f^2 C^h L^{f/h} h) = n^{o(1)}$.

\paragraph{Node Failures.}
The changes needed to accompany node failures are minuscule.
Instead of a set of links, we now associate with every tree node $x$ a set $A_x \subseteq V$ of network nodes.
We have $A_x = V$ in the roots and in each child $y$ of $x$, we include any element of $A_x$
independently with probability $p$.
Note that then the network $G_x = G - A_x$ is the subnetwork of $G$ \emph{induced}
by the node set $V{\setminus}A_x$.
As the sampling remains the same,
we get an analog of Lemma~\ref{lem:tree-property}.

\begin{lemma}
\label{lem:tree-property_node_failure}
	Let $i \in [K]$ be an index and $x$ a node of the tree $T_i$ at depth $r \ge 0$.
	For any $v \in v$, we have $\prob[v\in A_x]=p^r$.
	For any two different network nodes $v,v' \in V$,
	the events $v \in A_x$ and $v' \in A_x$ are independent.
\end{lemma}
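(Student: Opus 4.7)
The plan is to prove both clauses simultaneously by induction on the depth $r$ of the node $x$ in the tree $T_i$, exactly as the analogous Lemma~\ref{lem:tree-property} is proved. The only difference from the link-failure setting is that the sampled universe now consists of vertices of $V$ instead of links of $E$; since the sampling mechanism is identical (each element of the parent's set is retained independently with probability $p$), the argument transfers verbatim.

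For the base case $r = 0$, the root $x$ has $A_x = V$ by construction, so $\Pr[v \in A_x] = 1 = p^0$ for every $v \in V$, and the events $\{v \in A_x\}_{v \in V}$ are jointly deterministic, hence trivially independent. For the inductive step, I would let $y$ be a child of a node $x$ at depth $r$ and use that $A_y$ is built from $A_x$ via fresh, mutually independent coin flips (one per element of $A_x$, heads with probability $p$) that are also independent of all random choices higher in the tree. The marginal identity then follows as
\[
\Pr[v \in A_y] \;=\; \Pr[v \in A_x]\cdot p \;=\; p^{r}\cdot p \;=\; p^{r+1},
\]
using the inductive hypothesis. For pairwise independence, given distinct $v, v' \in V$,
\[
\Pr[v, v' \in A_y] \;=\; \Pr[v, v' \in A_x]\cdot p^{2} \;=\; p^{2r}\cdot p^{2} \;=\; \Pr[v \in A_y]\,\Pr[v' \in A_y],
\]
where the middle equality invokes the independence part of the inductive hypothesis at depth $r$ together with the fact that the depth-$(r{+}1)$ coins for $v$ and $v'$ are independent of each other and of $A_x$.

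There is no substantive obstacle: the whole proof is driven by the observation that the sampling at each level uses new random bits independent of the history, so single-element probabilities multiply to $p^{r}$ and two-element joint events factor at every step. The only bookkeeping subtlety worth double-checking is that the fresh coins at depth $r{+}1$ are indeed independent of the set $A_x$ they are restricted to, which is explicit in the construction.
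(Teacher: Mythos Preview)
Your proof is correct and follows exactly the approach the paper indicates: the paper states that Lemma~\ref{lem:tree-property} ``has a simple proof by induction over $r$'' and that Lemma~\ref{lem:tree-property_node_failure} follows as ``the sampling remains the same,'' which is precisely the induction you spell out. Your handling of the base case and the factoring of the joint probability via fresh independent coins at each level is the intended argument.
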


The other lemmas follow from this almost verbatim as before.
The only differences are that, if Algorithm~\ref{alg:tree-exact} reaches a leaf,
then the probability of the replacement path existing in the subnetwork
is $(1-p^h)^{L+1}$ for if the replacement path has up to $L$ links,
it can have up to $L+1$ nodes.
This changes the correction factor in Lemma~\ref{lem:correctness}
to $(1-L^{-1})^{L+1} \ge \frac{1}{8}$,
which is counter-acted by choosing the constant $c$ in the definition
of the number of trees $K$ marginally larger.
In contrast to link failures, there are $|V^2 \times \binom{V}{\le f}| = O(n^{2+f})$
relevant queries.

\section{Distance Sensitivity Oracles}

Recall that, given $s,t \in V$ and a set $F$ of at most $F$ failures, an $L$-hop $f$-DSO reports an overestimate of the length of $\pi(s,t,F)$ that matches the lower bound if $G-F$ contains an $L$-hop shortest $s$-$t$ path. It is straightforward to turn our generic tree structure of Theorem~\ref{thm:stand-alone-trees} into an $L$-hop $f$-DSO.
After all, that was the original purpose of \citet{WY13}
when defining $(L,F)$-RPCs.
The only difference we make is to precompute for every leaf $x$ of all sampling trees  the pairwise distances of nodes in $G_x$.
The query works exactly as Algorithm~\ref{alg:tree-exact}
only that, upon query $(s,t,F)$, in line~\ref{line:result} the stored \emph{distance} from $s$ to $t$ in $G_x$ is recorded.
The value returned by our $L$-hop $f$-DSO is given by the minimum of the computed distances.

The construction of the oracle is dominated by preparing the distances in the leaves.
The preprocessing time is $\Otilde(f L^{f+o(1)} \cdot T_{\text{APSP}})$,
where $T_{\text{APSP}}$ is the time needed to compute all-pairs shortest paths (APSP)
in an arbitrary subnetwork of $G$.
It depends on the properties of $G$.
If there are no negative cycles (e.g., because all link weights are non-negative),
one can use Dijkstra's or Johnson's algorithm running in time $T_{\text{APSP}} = \Otilde(mn)$.
If instead one is willing to use fast matrix multiplication,
then APSP can be computed in time $\Otilde(Mn^{\omega})$
for undirected networks with non-negative integer link weights in $[M]$,
or $O(Mn^{2.5286})$ for directed networks with integer weights in $\{-M, \dots M\}$
\cite{Alon97ExponentAPSP, Seide95APSPUnweightedUndirected, ShZw99, Zwick02DirectedAPSP}.
Here, $\omega < 2.371552$ is the matrix multiplication exponent
\cite*{Duan23FMMviaAsymtericHashing,VWilliams24MatrixMultiplicationAlphaToOmega}.

The space of the data structure is $\Otilde(f L^{f+o(1)} n^2)$,
again dominated by storing the distances in the leaves.
In turn, the query time remains at $\Otilde(f^{5/2}L^{o(1)})$
as the values $d_{G_x}(s,t)$ can be looked up in constant time.
This proves Theorem~\ref{thm:short-distance-exact}.

\section{Sensitivity Oracles for $k$-\textsc{Path}}
\label{sec:k_path}

We now turn to the fixed-parameter sensitivity oracle for the \textsf{NP}-complete $k$-\textsc{Path} problem.
We only treat link failures here, (i.e., $F \subseteq E$) to ease notation.
We further assume that access to an algorithm that computes simple paths with $k$ links ($k$-paths) in subnetworks of $G$ in a way that respects a certain
\emph{inheritance property}.
Suppose $F$ has at most $f$ links, $G{-}F$ has a $k$-path 
and the algorithm we use produces such a path $P$.
Then for any subnetwork $H \subseteq G{-}F$ that still contains $P$,
we require that the same path $P$ is also the output on of the algorithm in $H$.
Note that such a tie-breaking scheme is obtained by assign distinct weights 
to the links in $E$ and always choosing the $k$-path of minimum weight.

Our data structure is based on the sampling trees above,
where we naturally set $L = k$ and thus assume $k = o(\log k)$.
The construction of the trees with parameters $K,\alpha,h$, and $p$ is almost as before.
The difference is that any node $x$ of a tree $T_i$ is not only associated with one set of links $A_x$,
but also with a second one $S_x$.
In a parent-child traversal from $x$ to $y$,
$A_y$ is still obtained from $A_x$ by sampling each link of $A_x$ with probability $p$.

The construction of $S_y$ is a bit more involved.
Let $r$ be the depth of the parent, and thus $r+1$ the depth of the child $y$.
Let $S_x$ be the set associated with its parent.
To unify the exposition, if $y$ is the root, we set $S_x = E$.
To construct $S_y$ from $S_x$, we first build a family $\mathcal{P}_y$ of paths in $4^f \alpha^{h-(r+1)} \ln h$ independent rounds.
In each round, we sample a subset $I \subseteq A_y$ by selecting each link in $A_y$
independently with probability $p^{h-(r+1)}/2$.
If there exists a $k$-path in the network with link set $S_x{\setminus}I$,
we add the one computed with the inheritance property to $\mathcal{P}_y$.
After the last round, we set $S_y$ to the union of the paths in $\mathcal{P}_y$.
This ensures $S_y \subseteq S_x$.
Intuitively, for an (inherited) $k$-path from $S_x$ to \emph{not} survive in $\mathcal{P}_y$, it must have been destroyed by \emph{all} the random deletions $I$.
If $x$ is a leaf, we not only store $S_x$ but also the path information in $\mathcal{P}_x$.

The query algorithm is  very similar to Algorithm~\ref{alg:tree-exact}.
The difference is that line~\ref{line:containment} now checks for $F \cap S_x \subseteq A_y$ (previously, $F \subseteq A_y$).
In line~\ref{line:result}, where a leaf $x$ is reached,
the query algorithm searches $\mathcal{P}_x$ for a $k$-path that is disjoint from $F$.
If one exists, it is output and the whole algorithm terminates;
if no such path exists, the search continues with the next tree.
If none of the trees $T_0, \dots, T_{K-1}$ produce such a path,
it is reported that $G{-}F$ does not have a $k$-path.

\paragraph{Analysis.}
We use different parameters for this oracle:\\
$h = \sqrt{f \ln (k/f)}$, $K = c \ 8^h f\nwspace \ln n$, $\alpha = (k/f)^{f/h}$, and $p = (f/k)^{1/h}$.
Note that $\alpha = 1/p^f$ still holds.

Let $y$ be a node in the tree $T_i$, $r$ its depth, and $S_x$ the edge set in the parent
($S_x = E$ in the root).
Assume $G{-}F$ has a $k$-path, let $P$ be the one computed with the inheritance property.
$E(P)$ is its set of links.
We say $y$ is \emph{well behaved} if it satisfies the following properties:
(P1)~$F \cap S_x\subseteq A_y$,
(P2)~$|E(P) \cap A_y|\leq p^r k$,
and (P3)~$P \in \mathcal{P}_y$.

The query algorithm enforces (P1) in every step of the way.
We are mainly interested in the probability of (P3) holding in the leave that is reached for a query.
We bound this using well-behaved children.

\begin{lemma}
	Let $i \in [K]$ and $x$ a non-leaf node in $T_i$.
	\begin{enumerate}[($i$)]
		\item If $x$ satisfies \emph{(P1)}, then there exists a child $y$ of $x$ also
			satisfying \emph{(P1)} with probability is at least $1-\frac{1}{e}$.
		\item If $x$ satisfies \emph{(P2)}, then, for any child $y$ of $x$,
		the probability that $y$ also satisfies \emph{(P2)} is at least $\frac{1}{4}$.
		\item If $x$ is well behaved and has a child $y$ that satisfies both \emph{(P1)} and \emph{(P2)},
			then the probability that $y$ is even well behaved (satisfies \emph{(P3)}) is at least $1-\frac{1}{h}$.
\end{enumerate}
	Moreover, the events are independent of each other.
\label{lem:well-behaved}
\end{lemma}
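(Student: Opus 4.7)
The plan is to handle the three clauses on disjoint sources of randomness and then recover the stated independence. Because $P$ is a path in $G{-}F$, we have $E(P) \cap F = \emptyset$, so $F \cap S_x$ and $E(P) \cap A_x$ are disjoint subsets of $A_x$; the sub-samplings that decide (P1) at $y$ and (P2) at $y$ therefore act on disjoint edges and are independent. Event (P3) is a function of the fresh samples $I_1, \dots, I_{N_{\mathrm{rd}}}$ drawn in the construction of $\mathcal{P}_y$, which, conditional on $A_y$, are independent of everything determined by (P1) and (P2).

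For Clause~$(i)$, the construction yields $S_y \subseteq S_x$ for every child $y$, so (P1) at $x$ (i.e.\ $F \cap S_{\mathrm{parent}(x)} \subseteq A_x$) implies $F \cap S_x \subseteq A_x$. Since $A_y$ retains each edge of $A_x$ independently with probability $p$, any single child satisfies $F \cap S_x \subseteq A_y$ with probability $p^{|F \cap S_x|} \geq p^f = 1/\alpha$, and the probability that none of the $\alpha$ children does is at most $(1 - 1/\alpha)^\alpha \leq 1/e$. For Clause~$(ii)$, $|E(P) \cap A_y|$ is $\mathrm{Bin}(|E(P) \cap A_x|,\, p)$ conditional on $A_x$, whose mean $p \cdot |E(P) \cap A_x|$ is at most $p^r k$ by (P2) at $x$. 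A standard binomial tail estimate (Chebyshev--Cantelli for moderate means, combined with $\Pr[X = 0] \geq e^{-\Theta(\mu)}$ for small means) then yields the constant $1/4$.

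Clause~$(iii)$ is the technical heart. For $y$ at depth $r$, $\mathcal{P}_y$ is built in $N_{\mathrm{rd}} = 4^f \alpha^{h-r} \ln h$ independent rounds, each sampling $I \subseteq A_y$ by retaining every link with probability $q = p^{h-r}/2$. The algorithmic output on the subnetwork with edge set $S_x \setminus I$ equals $P$ exactly when (a)~$E(P) \cap I = \emptyset$, so that $P \subseteq S_x \setminus I$ (using $E(P) \subseteq S_x$ because $P \in \mathcal{P}_x$ by $x$'s well-behavedness), and (b)~$F \cap S_x \subseteq I$, so that $S_x \setminus I \subseteq G{-}F$ and the inheritance property returns $P$. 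The two constraints live on disjoint subsets of $A_y$ (using (P1)), hence they are independent, with $\Pr[\text{(a)}] = (1-q)^{|E(P) \cap A_y|}$ and $\Pr[\text{(b)}] = q^{|F \cap S_x|}$.

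The main obstacle is calibrating these estimates uniformly in $r$. The identity $\alpha p^f = 1$ collapses the combinatorial factors into
\[
    N_{\mathrm{rd}} \cdot q^f \;=\; 4^f\, \alpha^{h-r}\, (p^{h-r}/2)^f \ln h \;=\; 2^f \ln h.
\]
For inner nodes ($r < h$), $q = o(1)$ under the regime $f = o(\log k)$, and since $2q \cdot p^r k = p^h k = f$ while $p^r k \cdot q^2 = o(f)$, the refined inequality $(1-q)^n \geq e^{-n(q+q^2)}$ gives $(1-q)^{p^r k} \geq e^{-(f/2)(1+o(1))}$. For a leaf ($r = h$), $q = 1/2$ and $|E(P) \cap A_y| \leq p^h k = f$, so the exact value $(1-q)^{|E(P) \cap A_y|} \geq 2^{-f}$ is balanced precisely by the $4^f$ factor built into $N_{\mathrm{rd}}$. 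Combining the two, the expected number of successful rounds is at least $(2/\sqrt{e})^{f(1+o(1))} \ln h \geq \ln h$ (because $2 > \sqrt{e}$), so the probability that all $N_{\mathrm{rd}}$ rounds fail is at most $e^{-\ln h} = 1/h$, completing Clause~$(iii)$.
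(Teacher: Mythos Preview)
Your proof is correct and mirrors the paper's argument: Clauses~($i$) and~($ii$) are handled the same way, and for Clause~($iii$) you split each round into the independent events $E(P)\cap I=\emptyset$ and $F\cap S_x\subseteq I$ and then amplify over the $4^f\alpha^{h-r}\ln h$ rounds, exactly as the paper does. The only real difference is the estimate for $\Pr[E(P)\cap I=\emptyset]$: the paper rewrites $(1-q)^{|E(P)\cap A_y|}\ge\bigl(1-\tfrac{f}{2m}\bigr)^{m}$ with $m=p^{r}k\ge f$ and uses the monotonicity of $(1-a/m)^m$ in $m$ to obtain the uniform, non-asymptotic bound $2^{-f}$, which sidesteps your inner-node/leaf case split and the appeal to $f=o(\log k)$.
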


\begin{proof}
	Suppose $x$ satisfies (P1), let $S_z$ be the set of its parent (or $S_x = E$)
	and let $y_0,\ldots,y_{\alpha-1}$ be the children of $x$. 
	We have $F \cap S_x  \subseteq F \cap S_z \subseteq A_x$.
	The first inclusion is due to $S_x \subseteq S_z$ and the second one due to (P1).
	Since the elements of $A_{y_j}$ are sampled from $A_x$ with probability $p$,
	there exists an index $j \in [\alpha]$ such that $F \cap S_x \subseteq A_{y_j}$ with probability 
	$1-\prod_{j=0}^{\alpha-1} \prob[F \cap S_x \nsubseteq A_{y_j}] \geq 1-(1-p^f)^{\alpha} \geq 1-\frac{1}{e}$.

	We turn to Clause~($ii$).
	Let $r$ be the depth of $x$. 
	We now assume that $x$ satisfies (P2) (but not necessarily the other properties).
	That means, at most a $p^r$ ratio of the $k$ links of the path $P$ are removed in the network $G_x$ (are in the set $A_x$).
	For any child $y$ of $x$, the \emph{expected} size of $E(P) \cap A_y$ is $p \cdot |E(P) \cap A_x| \le p^{r+1} k$.
	In fact, the random variable $|E(P) \cap A_y|$ is binomially distributed with parameters $|E(P) \cap A_x|$ and $p$.
	The Central Limit Theorem states that $\prob[|E(P) \cap A_y| > p^{r+1} k] \le \frac{3}{4}$.
	Since the depth of $y$ is $r+1$, that implies that $y$ satisfies (P2) with probability $\frac{1}{4}$.
	
	The main part of this proof is Clause~($iii$) as it involves the new sets $S_y$.
	Let $x$ be well behaved and its child $y$ satisfy (P1) and (P2).
	Consider any of the round in the creation of $S_y$ and let $I \subseteq A_y$ be the sampled subset.
	We want to know whether the specific $k$-path $P$ from $G{-}F$ is included in $\mathcal{P}_y$ in this round.
	Due to the inheritance property, it is sufficient that $E(P) \subseteq (S_x{\setminus}I)$ 
	and at the same time $F \cap (S_x{\setminus}I) = \emptyset$.
	The latter condition formalizes that the network with edge set $S_x{\setminus}I$ is a subnetwork of $G{-}F$.

	We bound the probability of the two events.
	Parent $x$ is well behaved, a fortiori it satisfies (P3), 
	thus $E(P) \subseteq S_x$.
	Each link in $I$ is drawn from $A_y$ with probability $p^{h-(r+1)}/2$.
	We have $E(P) \subseteq (S_x{\setminus}I)$ if none of the links in $E(P) \cap A_y$ are drawn.
	Since $y$ satisfies (P2), this has probability at least	
	$(1-\frac{p^{h-(r+1)}}{2})^{|E(P) \cap A_y|} \ge (1-\frac{p^{h-(r+1)}}{2})^{p^{r+1}k}$.
	Inserting the definition $p = (\frac{f}{k})^{1/h}$, we get $p^{h-(r+1)} = \frac{f}{p^{r+1}k}$
	and $p^{r+1}k \ge p^h k = f$.
	Above estimate gives
	$(1-\frac{p^{h-(r+1)}}{2})^{p^{r+1}k} = (1-\frac{f}{2 \cdot p^{r+1}k} )^{p^{r+1}k} 
		\ge (1-\frac{f}{2 \cdot f} )^{f} = \frac{1}{2^f}$.

	Next is the probability that the sets $F$ and $S_x{\setminus}I$ are disjoint.
	Since the child $y$ also satisfies (P1), we have $F \cap S_x \subseteq A_y$.
	The sample $I$ is also a subset of $A_y$. 
	So for $F \cap S_x{\setminus}I = \emptyset$ all links in $F \cap S_x$ must be selected for $I$.
	This event has probability $(\frac{p^{h-(r+1)}}{2})^{|F \cap S_x|} \ge (\frac{p^{h-(r+1)}}{2})^f = \frac{p^{f(h-(r+1))}}{2^f}$.
	Recall that we chose the parameter such that $p^{f} = \alpha^{-1}$.
	The last estimate is thus equal to $2^{-f} \alpha^{-(h-(r+1))}$.
	
	Since the events $E(P) \subseteq (S_x{\setminus}I)$ and $F \cap (S_x{\setminus}I) = \emptyset$ are independent,
	they occur together with probability at least $4^{-f} \alpha^{-(h-(r+1))}$.
	We give the construction algorithm of our data structure
	$4^f \alpha^{h-(r+1)} \ln h$  rounds to try for it.
	The probability to include the path $P$ in $\mathcal{P}_y$ in any of the rounds is thus at least
	$1-\left(1-\frac{1}{4^f \alpha^{h-(r+1)}}\right)^{4^f\alpha^{h-(r+1)}\cdot \ln h} 
		\ge 1-\frac{1}{h}$.
\end{proof}

We now prove the correctness of the query algorithm.
If $G{-}F$ does not have a $k$-path, then the procedure indeed reports this fact.
If it reaches a leaf $x$ at all, it explicitly scans $\mathcal{P}_x$ for a path that is disjoint from $F$.
In the case that $G{-}F$ has a $k$-path, we argue over the parent-child traversals.
It is convenient to also include the algorithm jumping into the root of a tree as the first step.
Recall that for the root $y$, we set $A_y = E$ and use the convention $S_x = E$.
Therefore, $y$ trivially satisfies the properties (P1) and (P2).
Lemma~\ref{lem:well-behaved} thus shows that the root is well behaved with probability $(1-\frac{1}{e})\frac{1}{4}(1-\frac{1}{h})$.
Furthermore, since the algorithm actively looks for a child satifying (P1),
Lemma~\ref{lem:well-behaved} can also be iterated over the following $h$ traversals.
In summary a well-behaved child is reached with probability $((1-\frac{1}{e})\frac{1}{4}(1-\frac{1}{h}))^{h+1} \ge \frac{1}{8^h}$.
Repeating this in all $K = c \cdot 8^h f \ln n$ trees for a sufficiently large constant $c >0$
gives a high success probability over the $|\binom{E}{\le f}| = O(n^{2f})$ possible query sets $F$. 

\paragraph{Query Time, Space, and Preprocessing Time.}

Recall that we chose the parameters as
$K = c \nwspace 8^h f \ln n$, $\alpha = (k/f)^{f/h}$, and $h = \sqrt{f \ln(k/f)}$.
As before, this implies
$8^h = 8^{ \sqrt{f \ln(k/f)}} 
		= (k/f)^{\frac{\ln 8}{\ln(k/f)} \sqrt{f \ln(k/f)}} = (k/f)^{o(1)}$,
where we use $f = o(\log k)$.
That means, we have $K = (k/f)^{o(1)} f \ln n$.
However, the assumption on $f$ also implies that $\log f = o(\log k)$ (for any positive base of the logarithm).
Via $\log(k/f) = \log(k) - \log(f) = (1-o(1)) \log(k)$,
we get the seemingly stronger statement $f = o(\log(k/f))$.
We conclude $\alpha {=}(k/f)^{f/h} {=} (k/f)^{\sqrt{f/\ln(k/f)}} {=} (k/f)^{o(1)}$.

The total query time is $O(K \alpha hf + Kf4^f\log h)$.
The first term is derived as in the generic construction, assuming that the
test $F \cap S_x \subseteq A_y$ can be done in $O(f)$ time.
We explain below how to implement that.
The second term describes the time needed to search the collection $\mathcal{P}_x$ in all
the leaves the query algorithm may reach.
Using the parameters gives
$O((k/f)^{o(1)} f^2 \sqrt{f \ln(k/f)} \ln n 
		+ (k/f)^{o(1)} f^2 4^f \log h )$.
The second term of order $\Otilde(4^f f^{2-o(1)} k^{o(1)})$ is dominating.

For the space, consider a node $y$ in one of the trees that is not a root,
let $S_x$ be the second set associated with its parent.
We define $B_y = S_x \cap A_y$.
Observe that we have $F\cap S_ x \subseteq A_y$ if and only if $F\cap S_x\subseteq B_y$.
So it is enough to store $B_y$ instead of $A_y$
and we can indeed make that check in time $O(|F \cap S_x|) = O(f)$.
It will be advantageous for the analysis that both $B_y$ and $S_y$ are subsets of $S_x$.

The final data structure stores the trees $T_i$ for all $i \in [K]$;
in each root $x$ the set $S_x$; in each non-root node $y$ the sets $B_y$ and $S_y$;
and in each leaf $y$ the set $S_y$ and paths in $\mathcal{P}_y$.

Let $r$ be the depth of a node $x$, and let $y$ be a child of $x$.
It is enough to bound the size of the path collection $\mathcal{P}_x$
since it dominates the size of the union $S_x$,
and in turn the sizes of the subsets $B_y$ and $S_y$.
$\mathcal{P}_x$ contains at most $4^f \alpha^{h-r} \ln h$ different $k$-paths
and thus takes space $O(4^f \alpha^{h-r} k \log h)$.
Note that this is independent of the input graph $G$.

There are $\alpha^r$ many nodes in a tree at depth $r$,
giving space $\Otilde(4^f \alpha^{h} k)$ per level,
and $\Otilde(Kh 4^f \alpha^{h} k)$ for all levels and trees together.
Due to $\alpha^h = (k/f)^f$, this is
$\Otilde(4^f f^{3/2} (k/f)^{f+o(1)} k) = \Otilde(((k{+}4)/f)^{f+o(1)} f^{3/2} k)$.

For the preprocessing, we use the $k$-path algorithm by \citet{Williams_2009} running in time $2^k \nwspace \poly(n)$.
We spend time $4^f \alpha^{h-r} \cdot 2^k \nwspace \poly(n)$ preprocessing one tree node at depth $r$,
$4^f \alpha^{h} \cdot 2^k \nwspace \poly(n)$ in each level, and 
$((k{+}4)/f)^{f+o(1)} f^{3/2} \cdot 2^k \nwspace \poly(n)$ in total.

\paragraph{Cliques, Stars, and Cycles.}
Throughout the analysis, we never actually used the path structure of the solution,
only that it had $k$ edges. We can thus replace $k$-paths by any other graph motif we desire.
However, be aware that $k$-cliques have $\binom{k}{2} = O(k^2)$ edges.
Also, there are no known algorithms for finding $k$-cliques in time $g(k) \cdot \poly(n)$.

\section{Open Questions}
We made some progress in the design of fault-tolerant 
data structures for network problems. 
There are, however, several interesting open problems left in the area. 
\begin{itemize}
\item The randomized construction by \citet{WY13} has $\Otilde(fL^f)$ subnetworks.
	The deterministic one by \citet{KarthikParter21DeterministicRPC} needs $\Otilde(fL)^{f+1}$. 
	Is it possible to design a \emph{deterministic} $(L,f)$-replacement path covering
	with $\Otilde(L^f)$ subnetworks?
\item  \citet{KarthikParter21DeterministicRPC}
	also gave a lower bound on the size of $(L,f)$-RPCs.
	They showed that for any $n$, $L$, $f$ such that
	$(L/f)^{f+1} \le n$, there exists a network $G$ with $n$ nodes
	such that any $(L,f)$-RPC for $G$ must contain $\Omega((L/f)^f)$ networks.
	Even \citet{WY13} are off by a factor $\Otilde(f^{f+1})$.
	Can this gap be closed?
\item Beyond the mere number of subnetworks, our sampling trees requires $\Omega(n^2)$ space 
	when storing the networks associated with the nodes.
	Is there a more compact data structure for indexed subnetworks?
	It is known that in subquadratic space one must relax the requirement of 
	retrieving exact shortest paths ~\cite{ThorupZ05}.
\item Is it possible to generalize our framework to extremal distances, such as fault-tolerant diameter \cite*{Bilo22Extremal}, or to other network models, like temporal networks \cite*{Deligkas25EdgeCoverTemporalGraphs}? 
\item Finally, it would be interesting to implement our indexing scheme and compare its empirical performance with the \citet{WY13} construction. 
\end{itemize}

\newpage

\section{Acknowledgements}

In a previous version of this work, we claimed in Theorem~\ref{thm:k-path}
a preprocessing time of $(\frac{k+4}{f})^{f+o(1)} f^{3/2} \cdot 1.66^k \, \textup{\poly}(n)$ also for directed networks.
We are thankful to Dean Hirsch for pointing out this mistake.

\bibliography{DSO_aaai}

\end{document}